\documentclass{article}

\usepackage{arxiv}

\usepackage[utf8]{inputenc} 
\usepackage[T1]{fontenc}    
\usepackage{fancyhdr}
\usepackage{lmodern}
\usepackage[colorlinks]{hyperref}  
\usepackage{url}            
\usepackage{booktabs}       
\usepackage{amsfonts}       
\usepackage{nicefrac}       
\usepackage{microtype}      
\usepackage{lipsum}

\usepackage{graphicx}
\usepackage{epsfig}
\usepackage{epstopdf}
\usepackage{subfigure}

\usepackage{enumerate}
\usepackage{amssymb}
\usepackage{amsmath}
\usepackage{amsthm}
\usepackage{enumitem}
\usepackage{mathrsfs}

\newcommand{\R}{\mathbb{R}}

\newcommand{\N}{\mathbb{N}}

\newcommand{\Ra}{\mathcal{R}} 

\renewcommand{\rho}{\varrho}
\renewcommand{\phi}{\varphi}
\renewcommand{\b}{\widehat{\beta}}

\newcommand{\scal}[2]{\left\langle#1,#2\right\rangle}

\newtheorem{rem}{Remark}[section]

\newtheorem{thm}{Theorem}[section]

\title{Compressibility effect on Darcy porous convection}

\author{  
G. Arnone$^1$\href{https://orcid.org/0000-0002-3317-6358}{\includegraphics[scale=0.1]{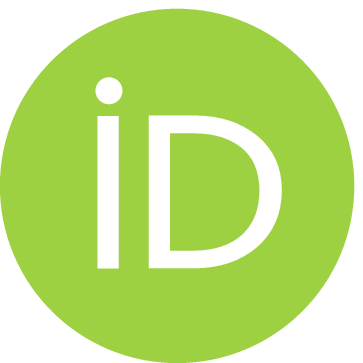}} \\  
\texttt{giuseppe.arnone@unina.it} \\ 
\And
F. Capone$^1$\thanks{Corresponding author.} \href{https://orcid.org/0000-0002-0672-999X}{\includegraphics[scale=0.1]{orcid.eps}} \\
\texttt{fcapone@unina.it} \\
\And 
R. De Luca$^1$\href{https://orcid.org/0000-0002-2109-7564}{\includegraphics[scale=0.1]{orcid.eps}}   \\
\texttt{roberta.deluca@unina.it} \\     
\And  
G. Massa$^1$\href{https://orcid.org/0000-0002-8401-9176}{\includegraphics[scale=0.1]{orcid.eps}} \\
\texttt{giuliana.massa@unina.it} \\ \\ $^1$Dipartimento di Matematica e Applicazioni 'R.Caccioppoli' \\ Università degli Studi di Napoli Federico II \\ Via Cintia, Monte S.Angelo, 80126 Napoli \\ Italy \\  }

\begin{document}
\maketitle

\begin{abstract}
Perfectly incompressible materials do not exist in nature but are a useful approximation of several media which can be deformed in non-isothermal processes but undergo very small volume variation. In this paper the linear analysis of the Darcy-Bénard problem is performed in the class of extended-quasi-thermal-incompressible fluids, introducing a factor $\beta$ which describes the compressibility of the fluid and plays an essential role in the instability results. In particular, in the Oberbeck-Boussinesq approximation, a more realistic constitutive equation for the fluid density is employed in order to obtain more thermodynamic consistent instability results. Via linear instability analysis of the conduction solution, the critical Rayleigh-Darcy number for the onset of convection is determined as a function of a dimensionless parameter $\widehat{\beta}$ proportional to the compressibility factor $\beta$, proving that $\widehat{\beta}$ enhances the onset of convective motions. 
\end{abstract}
\keywords{Porous Media \and Incompressible fluids \and Boussinesq approximation \and Compressibility effect \and Instability analysis \and Extended-Quasi-Thermal-Incompressible fluids}

\section{Introduction}

The mathematical models describing the onset of convective motions in horizontal layers of fluids heated from below are well known for both clear fluids and fluid-saturated porous media (see \cite{chandr, NieldBejan2017} and references therein) and have been widely analysed under various assumptions: in \cite{jacopo1,jacopo2,jacopo3} the authors analysed the effect of the local thermal non-equilibrium hypothesis on the onset of convection in horizontal porous layers; in \cite{rees2020} the Darcy-Bénard problem for Bingham fluids has been studied, while in \cite{barletta2012,celli2019} the authors examined the onset of convection in an inclined horizontal layer of porous medium; convective instabilities in horizontal layers of bi-disperse porous media have been analysed in \cite{giuliana1,giuliana2,challoob2021,giuliana3,giuliana4}, the onset of penetrative convection has been studied in \cite{CapGentHill2011,giuseppe1,CapGentHill2010}. Usually, the fluid is assumed as Newtonian and incompressible. However, this is an approximation of the real phenomenon, since perfectly incompressible fluids do not exist in nature and, moreover, when the process is not isothermal, the notion of incompressibility is not well defined, see \cite{Gouinrogers2012}. From a mathematical point of view, the pressure for a compressible fluid is a constitutive function, while the pressure for an incompressible fluid is a Lagrange multiplier that comes from the constraint of incompressibility. To study and compare the mathematical results and solutions of both compressible and incompressible media, we will consider the pressure $p$ and the temperature $T$ as thermodynamic variables, therefore $V=V(p,T)$ and $\varepsilon=\varepsilon(p,T)$ are the constitutive equations for the specific volume $V=\frac{1}{\rho}$ ($\rho$ being the fluid density) and the internal energy of the system $\varepsilon$ \cite{ruggeri2021classical}. \\
According to M\"uller, see \cite{Muller1985}, an incompressible fluid can be defined as a medium whose constitutive equations depend only on temperature $T$ and not on pressure $p$, in particular:
\begin{equation}\label{muller}
	\rho=\rho(T), \quad \varepsilon=\varepsilon(T)
\end{equation}
Nevertheless, as pointed out by Gouin \emph{et al.} in \cite{Gouinmuller2012}, M\"uller proved that the definition \eqref{muller} is compatible with the entropy principle only if the density is a constant function $ \rho(T)=\rho_0 $. On assuming constant fluid density, no buoyancy-driven convective instabilities are allowed. However, according to experimental observations, fluids expand when heated and a theoretical assumption such as the very widely employed Oberbeck-Boussinesq approximation (see \cite{oberbeck1879,boussinesq1903}) - which consists in setting constant the density of the fluid in all terms of the governing equations except in the body force term due to gravity - is actually reasonable. Therefore, in order to account for the experimental validity of the problem and its thermodynamic consistency, Gouin \emph{et. al} in \cite{Gouinmuller2012} defined a new class of fluids, the "quasi-thermal-incompressible fluids", modifying the constitutive equations \eqref{muller}: \emph{a quasi-thermal-incompressible fluid is a medium for which the only equation independent of the pressure $p$ among all the constitutive equations is the fluid density}. For such class of fluids, the constitutive equations \eqref{muller} become:
\begin{equation}\label{muller2}
	\rho=\rho(T), \quad \varepsilon=\varepsilon(p,T)
\end{equation}
Using the above definition, the authors proved that a quasi-thermal-incompressible fluid tends to be perfectly incompressible, in the sense of M\"uller, when the following estimate for the pressure holds:
\begin{equation}\label{estimate}
	p \ll \dfrac{c_p}{\vert V' \vert}=\dfrac{\rho^2 c_p}{\vert \rho' \vert}
\end{equation}
where $c_p$ is the specific heat capacity at constant pressure. In convection problems, there are no sharp temperature variations and, since the temperature variation usually does not exceed $10 K$, the density variation is of $1 \% $, see \cite{chandr}, therefore the Oberbeck-Boussinesq approximation is coherently employed. When one does not expect large differences in temperature, one may assume the fluid density in the body force term has a linear dependence on temperature:  
\begin{equation}\label{OB}
	\rho(T)=\rho_0 [1-\alpha(T-T_0)]
\end{equation}
where $\rho_0$ is the fluid density at the reference temperature $T_0$, while $\alpha$ is the thermal expansion coefficient, defined as:
$$\alpha=\dfrac{V_T}{V}$$
$V$ being the specific volume and $V_T$ the partial derivative of $V$ with respect to temperature $T$. When \eqref{OB} is assumed, the estimate \eqref{estimate} becomes:
\begin{equation}\label{estimate2}
	p \ll p_{cr}=\dfrac{c_p \rho_0}{\alpha} 
\end{equation}
The critical pressure value $p_{cr}$ gives a limit of validity for the Oberbeck-Boussinesq approximation and due to estimate \eqref{estimate2}, Gouin \emph{et. al} concluded that a quasi-thermal-incompressible fluid is experimentally similar to a perfectly incompressible fluid. \\
Later on, with the aim of proposing a more realistic model for fluid dynamics problems, Gouin and Ruggeri in \cite{Gouinrogers2012} introduced the definition of \emph{extended-quasi-thermal-incompressible} fluid by which they modified the Oberbeck-Boussinesq approximation as follows: 
\begin{equation}\label{OB2}
	\rho(p,T)=\rho_0 [1-\alpha(T-T_0)+\beta(p-p_0)]
\end{equation}
where $p_0$ is the reference pressure, while $\beta$ is the compressibility factor defined as
$$\beta=-\dfrac{V_p}{V}$$
with $V_p$ the partial derivative of the volume with respect to the pressure. Moreover, the Authors carried out a detailed analysis of the thermodynamic stability, proving that the compressibility factor has a lower bound, namely:
\begin{equation}\label{estimate3}
  \beta > \beta_{cr}=\dfrac{\alpha^2 TV}{c_p}(>0).
\end{equation}
It is possible to evaluate the order of magnitude of both critical pressure $p_{cr}$ and compressibility factor $\beta_{cr}$, \eqref{estimate2} and \eqref{estimate3} in the case of liquid water (see \cite{HandbookChemyPhys2005}), since: 
$$ T_0=293 \ K, \ p_0=10^5 \ Pa, \ V_0=10^{-3} \ m^3/kg, \ \rho_0=10^3 \ kg/m^3, \ c_p=4.2 \cdot 10^3 \ J/kg \ K, \ \alpha=207 \cdot 10^{-6} /K  $$
they assume the following values:
$$p_{cr}=2 \cdot 10^{10} \ Pa=2 \cdot 10^{5} \ atm \quad \text{and} \quad \beta_{cr}= 3 \cdot 10^{-12} /Pa.$$  

In \cite{PassRogers2014} such extended approximation was employed for the linear instability analysis of the conduction solution for the classical B\'enard problem, and the Authors proved via linear instability analysis the destabilizing effect of a dimensionless parameter $\widehat{\beta}$, proportional to the positive compressibility factor $\beta$, on the onset of convection. \\
To the best of our knowledge, there is a lack of investigations on the onset of convective motions in porous media assuming the definition of extended-quasi-thermal-incompressible fluid. This lack motivated the present paper. In Section \ref{model} we derive the mathematical model describing the onset of convection for the Darcy-B\'enard problem, while in Section \ref{linear} we perform a linear instability analysis of the thermal conduction solution. In Section \ref{disc} we analyse the asymptotic behaviour of the critical Rayleigh-Darcy number $\Ra$ with respect to the dimensionless compressibility factor $\widehat{\beta}$, proving the destabilizing effect of $\widehat{\beta}$ on the onset of convective instabilities. The paper ends with a concluding Section that recaps all the results.

\section{Mathematical Model}\label{model}
Let us consider a reference frame $Oxyz$ with fundamental unit vectors $\{ \textbf{i},\textbf{j},\textbf{k}\}$ ($\textbf{k}$ pointing vertically upwards) and a horizontal layer $L=\R^2 \times [0,d]$ of fluid-saturated porous medium. To derive the governing equations for the seepage velocity $\textbf{v}$, the temperature field $T$ and the pressure field $p$, let us employ the modified Oberbeck-Boussinesq approximation, see \cite{PassRogers2014}: 
\begin{itemize}
    \item the fluid density $\rho$ is constant in all terms of the governing equations (i.e. $\rho=\rho_0$), except in the buoyancy term;
    \item in the body force term, the constitutive law for the fluid density is given by
    \begin{equation}
        \rho(T)=\rho_0 [1-\alpha (T-T_0) + \beta (p-p_0)]
    \end{equation}
    with $\alpha$ and $\beta$ the thermal expansion coefficient and the compressibility factor, respectively, defined as $$\alpha=\dfrac{V_T}{V}, \quad \beta=-\dfrac{V_p}{V}$$
    \item $\nabla \cdot \textbf{v}=0$ and $\textbf{D} : \textbf{D} \approx 0$.
    \end{itemize}
    Therefore, the mathematical model, according to Darcy's law, is the following
\begin{equation}\label{syst1}
    \begin{cases}
    \dfrac{\mu}{K} \textbf{v}= - \nabla p -\rho_0 [1-\alpha (T-T_0) + \beta (p-p_0)] g \textbf{k} \\ \nabla \cdot \textbf{v} = 0 \\ \rho c_V \Bigl( \dfrac{\partial T}{\partial t} + \textbf{v} \cdot \nabla T \Bigr) = \chi \Delta T
    \end{cases}
\end{equation}
where $\mu, K, \chi, c_V$ are fluid viscosity, permeability of the porous body, thermal conductivity and specific heat at constant volume, respectively. \\ 
To system (\ref{syst1}) the boundary conditions are appended, i.e.:
\begin{equation}\label{BC1}
\begin{aligned}
    \textbf{v} \cdot \textbf{k} =0 \quad & \text{on} \ z=0,d \\
    T=T_L  \quad & \text{on} \ z=0 \\
    T=T_U  \quad & \text{on} \ z=d \\
    \nabla p \cdot \textbf{k} + \rho_0 d \beta g \ p=0 \quad & \text{on} \ z=0,d
\end{aligned}
\end{equation}
with $T_L>T_U$, since the layer is heated from below. Assuming the reference temperature $T_0=T_L$, system (\ref{syst1})-(\ref{BC1}) admits the following stationary conduction solution 
\begin{equation}
 \begin{aligned}
 &  \textbf{v}_b=\textbf{0}, \ T_b(z)=T_L-\dfrac{T_L-T_U}{d} z, \\
  & p_b(z)=\dfrac{1}{\beta d} + \Bigl[ \dfrac{1}{\beta}  - \dfrac{\alpha(T_L-T_U)}{\beta^2 \rho_0 g d} \Bigr] (e^{-\rho_0 g \beta z} -1) - \dfrac{\alpha (T_L-T_U)}{\beta d} z .
 \end{aligned}
\end{equation}
Let $(\textbf{u},\theta,\pi)$ be a perturbation to the basic solution, so the equations governing the perturbation fields are
\begin{equation}\label{pert1}
     \begin{cases}
    \dfrac{\mu}{K} \textbf{u} = - \nabla \pi + \rho_0 \alpha g \theta  \textbf{k} -  \rho_0 \beta g \pi  \textbf{k} \\ \nabla \cdot \textbf{u} = 0 \\ \dfrac{\partial \theta}{\partial t} + \textbf{u} \cdot \nabla \theta = \dfrac{T_L-T_U}{d} \textbf{u} \cdot \textbf{k} + k \Delta \theta
    \end{cases}
\end{equation}
where $k=\frac{\chi}{\rho c_V}$ is the thermal diffusivity. Let us introduce the following scales 
$$\pi=P \pi^*, \quad \textbf{u}=U \textbf{u}^*, \quad \theta=T^{\#} \theta^*, \quad t=\tau t^*, \quad x=dx^* $$
where:
\[ P=\dfrac{\mu k}{K},\quad U=\dfrac{k}{d},\quad T^\#=T_L-T_U, \quad \tau=\dfrac{d^2}{k}. \]
Therefore, the corresponding dimensionless system of equations, omitting all the stars, is the following:

\begin{equation}\label{nondimsys}
    \begin{cases}
    \textbf{u}=-\nabla\pi+\Ra \theta \textbf{k}-\widehat{\beta}\pi \textbf{k}\\
    \nabla \cdot \textbf{u}=0\\
    \dfrac{\partial \theta}{\partial t}+\textbf{u}\cdot \nabla \theta=w+\Delta \theta
    \end{cases}
\end{equation}
where $u=\textbf{u} \cdot \textbf{i}$ and $w=\textbf{u} \cdot \textbf{k}$ and
\[ \Ra=\dfrac{\rho_0\alpha g d(T_L-T_U)K}{\mu k}, \quad \widehat{\beta}=\rho_0d g \beta  \]
are the Rayleigh-Darcy number and the dimensionless compressibility factor, respectively. \\ \\ 
To system \eqref{nondimsys} we add the following boundary conditions

\begin{equation}\label{nondimbc}
    w=\theta=\nabla\pi\cdot \textbf{k}+\widehat{\beta}\pi=0\qquad \text{on}\; z=0,1
\end{equation}
and initial conditions
\begin{equation}\label{incond}
    {\bf u}({\bf x},0)={\bf u}_0({\bf x}), \quad \pi({\bf x},0)=\pi_0({\bf x}), \quad \theta({\bf x},0)=\theta_0({\bf x}).
\end{equation}
Accounting for \eqref{nondimsys}$_2$, taking the divergence of \eqref{nondimsys}$_1$, system \eqref{nondimsys} becomes:

\begin{equation}\label{sys1}
 \begin{cases}
    \Delta \pi+\widehat{\beta}\dfrac{\partial \pi}{\partial z}=\Ra \dfrac{\partial \theta}{\partial z}\\
    \textbf{u}=-\nabla\pi+\Ra \theta \textbf{k}-\widehat{\beta}\pi \textbf{k}\\[2mm]
    \dfrac{\partial \theta}{\partial t}+\textbf{u}\cdot \nabla \theta=w+\Delta \theta 
 \end{cases}
\end{equation}
\begin{rem}
In the sequel, we will focus on bi-dimensional perturbations in the plane $(x,z)$ and assume the perturbations fields $\pi, \textbf{u}, \theta$ to be periodic functions in the horizontal direction $x$ with period $\frac{2 \pi}{a_x}$, $a_x$ being the wavenumber. Without loss of generality, in the sequel we will assume that the wavelength is 1, so $\frac{2 \pi}{a_x}=1$ (see \cite{PassRogers2014, corli2019}) and we will consider the periodicity cell $V$ given by: $$V=[0,1] \times [0,1].$$   
Moreover, with $\|\cdot\|$ and $\scal{\cdot}{\cdot}$ we will denote norm and scalar product on $L^2(V)$, respectively.  
\end{rem}

\section{Linear instability analysis}\label{linear}
To perform the linear instability analysis of the basic solution, let us consider the linear version of \eqref{sys1}:

\begin{equation}\label{sys1lin}
 \begin{cases}
    \Delta \pi+\widehat{\beta}\dfrac{\partial \pi}{\partial z}=\Ra \dfrac{\partial \theta}{\partial z}\\
    \textbf{u}=-\nabla\pi+\Ra \theta \textbf{k}-\widehat{\beta}\pi \textbf{k}\\[2mm]
    \dfrac{\partial \theta}{\partial t}=w+\Delta \theta 
 \end{cases}
\end{equation}
together with boundary conditions:

\begin{equation}\label{BC}
    w=\theta=0\quad \text{and}\quad \dfrac{\partial \pi}{\partial z}=-\widehat{\beta}\pi\qquad \quad \text{on}\; z=0,1
\end{equation}
By virtue of the Robin boundary condition \eqref{BC} on the pressure, it is possible to choose:

\begin{equation}
    \pi=e^{-\widehat{\beta}z}\Pi(x,z,t).
\end{equation}
Therefore equation \eqref{sys1lin}$_1$ becomes:

\begin{equation}
 \Delta \Pi-\widehat{\beta}\dfrac{\partial \Pi}{\partial z}=\Ra e^{\widehat{\beta}z} \dfrac{\partial \theta}{\partial z}
\end{equation}
and the Robin boundary conditions $\dfrac{\partial \pi}{\partial z}=-\widehat{\beta}\pi$ becomes the Neumann condition given by: 

\begin{equation}
    \dfrac{\partial \Pi}{\partial z}=0\qquad z=0,1
\end{equation}
Introducing the stream function $\Phi$ such that

\begin{equation}\label{SF}
    u=-\dfrac{\partial \Phi}{\partial z},\quad w=\dfrac{\partial \Phi}{\partial x}
\end{equation}
and considering the curl of \eqref{sys1lin}$_2$ projected on the $y$-axis, one obtains:

\begin{equation}
    \Delta \Phi=\Ra \dfrac{\partial \theta}{\partial x}-\widehat{\beta}e^{-\widehat{\beta}z}\dfrac{\partial \Pi}{\partial x}
\end{equation}
Hence, to perform the linear instability analysis of the conduction solution, we consider the following system:

\begin{equation}\label{system2}
    \begin{cases}
    \Delta \Pi-\widehat{\beta}\dfrac{\partial \Pi}{\partial z}=\Ra e^{\widehat{\beta}z} \dfrac{\partial \theta}{\partial z}\\[2mm]
  \Delta \Phi=\Ra \dfrac{\partial \theta}{\partial x}-\widehat{\beta}e^{-\widehat{\beta}z}\dfrac{\partial \Pi}{\partial x}\\[2mm]
      \dfrac{\partial \theta}{\partial t}=\dfrac{\partial \Phi}{\partial x}+\Delta \theta
    \end{cases}
\end{equation}
to which we add the boundary conditions: 
\begin{equation}\label{BCtpf}
  \theta=\dfrac{\partial \Pi}{\partial z}=\Delta \Phi=0 \qquad \text{on} \ z=0,1.
\end{equation}
By virtue of \eqref{BCtpf}, since system \eqref{system2} is linear, we assume normal mode solutions:
\begin{equation}\label{series}
    \begin{aligned}
        \theta(x,z,t)&=\sum_{m,n=0}^{\infty}[A^1_{mn}(t)\cos(2\pi mx)\sin(\pi nz)+A^2_{mn}(t)\sin(2\pi mx)\sin(\pi nz)],\\
        \Pi(x,z,t)&=\sum_{m,n=0}^{\infty}[B^1_{mn}(t)\cos(2\pi mx)\cos(\pi nz)+B^2_{mn}(t)\sin(2\pi mx)\cos(\pi nz)],\\
        \Delta \Phi(x,z,t)&=\sum_{m,n=0}^{\infty}[C^1_{mn}(t)\cos(2\pi mx)\sin(\pi nz)+C^2_{mn}(t)\sin(2\pi mx)\sin(\pi nz)].
    \end{aligned}
\end{equation}
In order to get zero mean value on $V$, we assume $(m,n)\in\N\times \N_0$. Applying the laplacian operator to \eqref{system2}$_3$ and by virtue of \eqref{series}, one obtains:

\begin{equation}\label{fourier}
    \!\!\!\! \begin{cases} \displaystyle
    \sum_{m,n}  [B^1_{mn}\cos(2\pi mx)+B^2_{mn}\sin(2\pi mx)] [-\alpha_{m n} \cos(n \pi z) + \widehat{\beta} n \pi \sin(n \pi z)] = \\ \displaystyle \qquad \qquad \Ra \sum_{m,n}  n \pi [A^1_{mn}\cos(2\pi mx)+A^2_{mn}\sin(2\pi mx)] e^{\widehat{\beta} z} \cos(n \pi z) \\[6mm]
    \displaystyle
    \sum_{m,n}  [C^1_{mn}\cos(2\pi mx)+C^2_{mn}\sin(2\pi mx)] \sin(n \pi z) = \\ \displaystyle \qquad \qquad  \sum_{m,n=0}  2 \pi m \Bigl\{ \Ra  [-A^1_{mn}\sin(2\pi mx)+A^2_{mn}\cos(2\pi mx)] \sin(\pi nz) \\ \displaystyle \qquad \qquad - \widehat{\beta} e^{-\widehat{\beta} z} [-B^1_{mn}\sin(2\pi mx)+B^2_{mn}\cos(2\pi mx)] \cos(n \pi z) \Bigr\} \\[6mm] 
    
    \displaystyle
    \sum_{m,n} -  \alpha_{m n}  [(\dot{A}^1_{mn} + \alpha_{mn} A^1_{mn}) \cos(2\pi mx)+  (\dot{A}^2_{mn} + \alpha_{mn} A^2_{mn})   \sin(2\pi mx)]  \sin(n \pi z) = \\ \displaystyle \qquad \qquad  \sum_{m,n} 2 \pi m [-C^1_{mn}\sin(2\pi mx)+C^2_{mn}\cos(2\pi mx)] \sin(n \pi z)
    \end{cases} 
\end{equation}
where $\alpha_{m n}=(2 \pi m)^2+(\pi n)^2$ and $\dot{A}^i_{mn}=\dfrac{dA^i_{mn}}{dt}$. From \eqref{fourier}$_3$, it immediately follows that 

\begin{equation}\label{C}
    \begin{aligned}
    C^1_{mn} &=  \dfrac{\alpha_{mn}}{2 \pi m} (\dot{A}^2_{mn} + \alpha_{mn} A^2_{mn}) \\ C^2_{mn} &= -  \dfrac{\alpha_{mn}}{2 \pi m} (\dot{A}^1_{mn} + \alpha_{mn} A^1_{mn})
    \end{aligned}
\end{equation}
Let us multiply \eqref{fourier}$_1$ by $\cos(k \pi  z)$ and integrate with respect to $z \in (0,1)$, therefore we get:

\begin{equation}
    \mkern-18mu \mkern-36mu \begin{aligned}
    \displaystyle 
    & \sum_{m,n}   [B^1_{mn}\cos(2\pi mx) \!+\! B^2_{mn}\sin(2\pi mx)] \left[\!-\! \alpha_{m n} \!\! \int_0^1  \!\! \cos(n \pi z)\cos(k\pi z) dz \!+\! \widehat{\beta} n \pi \!\! \int_0^1 \!\! \sin(n \pi z)\cos(k\pi z) dz\right] \!=\! \\ & \displaystyle \qquad \qquad  \sum_{m,n} \Ra \  n \pi [A^1_{mn}\cos(2\pi mx)+A^2_{mn}\sin(2\pi mx)] \int_0^1 e^{\widehat{\beta} z} \cos(n \pi z) \cos(k \pi z) dz
    \end{aligned}
\end{equation}
namely:

\begin{equation}\label{30}
    \begin{aligned}
    \displaystyle
    & \sum_{m,n}  [B^1_{mn}\cos(2\pi mx)+B^2_{mn}\sin(2\pi mx)] \left[-\dfrac{1}{2}\alpha_{mk}\delta_{nk}+\widehat{\beta} F_{nk} \right] = \\ & \displaystyle \qquad \qquad  \Ra \sum_{m,n}  \   [A^1_{mn}\cos(2\pi mx)+A^2_{mn}\sin(2\pi mx)] \dfrac{\widehat{\beta}}{2} \mathcal{L}_{nk}(\widehat{\beta})
    \end{aligned}
\end{equation}
whit:

\begin{equation}
\begin{aligned}
    F_{nk} &= \begin{cases} 0 \quad & \text{if} \ n=k \\ \dfrac{n^2((-1)^{n+k}-1)}{(k-n)(k+n)} \quad & \text{if} \ n \ne k \end{cases} \\ \mathcal{L}_{nk}(\widehat{\beta}) &= n\pi (e^{\widehat{\beta}}(-1)^{k+n}-1)\left(\dfrac{1}{\pi^2(k+n)^2+\widehat{\beta}^2}+\dfrac{1}{\pi^2(k-n)^2+\widehat{\beta}^2}\right)
\end{aligned}
\end{equation}

Setting

\begin{equation}
   \mathcal{D}^m_{nk}(\widehat{\beta})=\delta_{nk}+\widehat{\beta}\begin{cases}
   \dfrac{-2n}{\alpha_{mk}}\left(\dfrac{1}{n-k}+\dfrac{1}{n+k}\right)&\quad \text{if}\quad n+k\;\;\text{odd}\\
   0&\quad \text{if}\quad n+k\;\;\text{even}
   \end{cases}
\end{equation}
by virtue of linearity, from \eqref{30} one obtains 
\begin{equation}\label{32}
  \begin{aligned}
    \displaystyle
     \sum_{n=0}^{\infty}  B^i_{mn} \mathcal{D}^m_{nk}(\widehat{\beta}) =  -\dfrac{\Ra\widehat{\beta}}{\alpha_{mk}} \sum_{n=0}^{\infty}  \   A^i_{mn} \mathcal{L}_{nk}(\widehat{\beta}), \qquad i=1,2.
    \end{aligned}
\end{equation}
Let us remark that the $N \times N$ matrix $\mathcal{D}^m$ is invertible since it is strictly diagonally dominant for small $\widehat{\beta}$ (see \cite{matrices}). Moreover, through a fixed point argument, an estimate on the compressibility factor $\widehat{\beta}$ guaranteeing the invertibility of the matrix $\mathcal{D}^m$ for all $N \in \N$ is obtained. The following theorem holds. 
\\
\begin{thm}\label{thmD}
If 
\begin{equation}\label{bbbb}
    \b<\dfrac{\pi^2}{2 c}  
\end{equation}
with $ c= \dfrac{1}{8} [2 \pi \coth({2 \pi})+1] $, then the matrix $\mathcal{D}^m$ is invertible for all $N\in\N$. 
\end{thm}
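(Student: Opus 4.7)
The plan is to prove invertibility of $\mathcal{D}^m$ via a Banach contraction argument applied to an appropriately rescaled form of the system $\sum_n B_n\mathcal{D}^m_{nk}=y_k$. Writing $\mathcal{D}^m=I+\widehat{\beta}E$, the definition of $\mathcal{D}^m_{nk}$ together with the identity $\tfrac{1}{n-k}+\tfrac{1}{n+k}=\tfrac{2n}{n^2-k^2}$ gives
\[
E_{nk}=-\frac{4n^2}{\alpha_{mk}(n^2-k^2)}\quad(n+k\text{ odd}),\qquad E_{nk}=0\text{ otherwise},
\]
with $E_{nn}=0$. Hence invertibility of $\mathcal{D}^m$ amounts to unique solvability of $B_k=y_k-\widehat{\beta}\sum_n E_{nk} B_n$ for every right-hand side $y\in\R^N$.

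A direct iteration in $\ell^\infty$ is doomed because the columns of $E$ are not summable (each term tends to $4/\alpha_{mk}$ as $n\to\infty$). I would therefore rescale by setting $\widetilde{B}_n:=\alpha_{mn}B_n$ and multiplying the equation through by $\alpha_{mk}$, yielding the equivalent system
\[
\widetilde{B}_k=\alpha_{mk}y_k+2\widehat{\beta}\sum_n F_{nk}\widetilde{B}_n,\qquad F_{nk}=\frac{2n^2}{\alpha_{mn}(n^2-k^2)}\;(n+k\text{ odd}).
\]
Because $n^2/\alpha_{mn}\le 1/\pi^2$ and $|n^2-k^2|^{-1}=O(n^{-2})$ in $n$, both row and column sums of $F$ are now finite. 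A Gershgorin (equivalently, Banach-contraction-in-$\ell^\infty$) argument on the rescaled system then reduces invertibility of $I+2\widehat{\beta}F^\top$ to the single estimate
\[
\sup_{k\ge 0}\ \sum_{n:\,n+k\text{ odd}}|F_{nk}|\ \le\ \frac{c}{\pi^2},
\]
which, combined with $2\widehat{\beta}\cdot c/\pi^2<1$, delivers the contraction and hence the desired unique solvability.

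This estimate is the crux of the proof. Taking the worst case $m=1$ (since $\alpha_{mn}=\pi^2(4m^2+n^2)\ge\pi^2(n^2+4)$ with equality at $m=1$), the identity $n^2/(n^2+4)=1-4/(n^2+4)$ splits the sum as
\[
\sum_n|F_{nk}|=\frac{2}{\pi^2}\!\left[\sum_{n:\,n+k\text{ odd},\,n\ne k}\frac{1}{|n^2-k^2|}\ -\ 4\sum_{n:\,n+k\text{ odd},\,n\ne k}\frac{1}{(n^2+4)|n^2-k^2|}\right].
\]
The first sum is evaluated by the telescoping partial fraction $\tfrac{1}{n^2-k^2}=\tfrac{1}{2k}[\tfrac{1}{n-k}-\tfrac{1}{n+k}]$, which produces harmonic tails of the form $\tfrac{1}{k}\sum_{l\text{ odd}}^{2k-1}1/l$; the second is evaluated by a further partial-fraction split in $n^2$ and by the Mittag--Leffler formula $\pi\coth(2\pi)=\sum_{n\in\mathbb{Z}}2/(n^2+4)$, which identifies exactly $c=\sum_{k\ge 0}1/(k^2+4)=\tfrac{1}{8}[2\pi\coth(2\pi)+1]$.

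The main obstacle is verifying that the supremum on the left is genuinely bounded by $c/\pi^2$ uniformly in $k$ and $N$: the case $k=0$ collapses neatly to $(2\coth(2\pi)-\coth\pi)/(4\pi)$, which is $\le c/\pi^2$ essentially because $\coth$ is decreasing; but for $k\ge 1$ one must check that the harmonic remainder coming from the first sum is absorbed by the negative contribution of the second, uniformly in $k$, so that the bound $c/\pi^2$ is not exceeded at any intermediate value, and finally reassemble the estimates into the contraction condition $\widehat{\beta}<\pi^2/(2c)$.
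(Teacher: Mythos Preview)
Your approach is genuinely different from the paper's. You try an $\ell^\infty$/Gershgorin contraction on a rescaled system ($\widetilde B_n=\alpha_{mn}B_n$), whereas the paper works in $\ell^2$: it normalises by $\gamma_{mn}=\sqrt{\alpha_{mn}b_{mn}}$ (the energy scaling making $-\Delta$ the identity), tests the equation against $B$ itself, and shows coercivity $-\mathcal P(B)\cdot B\ge(1-2c\widehat\beta/\pi^2)|B|^2-K|B|$, then wraps this in a Leray--Schauder fixed-point argument. The key algebraic step in the paper is a symmetrisation: writing the off-diagonal coupling as a sum $I+J$ and swapping $n\leftrightarrow r$ in half the terms, the identities $\tfrac{n}{n+r}+\tfrac{r}{n+r}=1$ and $\tfrac{n}{n-r}+\tfrac{r}{r-n}=1$ collapse the complicated kernel to the bare product $\widetilde B_{mnr}\widetilde B_{mrn}$. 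After Cauchy--Schwarz and $\gamma_{mr}^2\ge\pi^2(4+r^2)/4$, the single series $\sum_{r\ge0}1/(4+r^2)=c$ drops out directly, with no case analysis in $k$.

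Your plan, by contrast, leaves the decisive estimate open: you need $\sup_k\sum_n|F_{nk}|\le c/\pi^2$, and you yourself flag the $k\ge1$ cases as an ``obstacle'' without resolving them. That is a genuine gap, not a detail. The constant $c$ is not canonical in your setup --- for $k=0$ the row sum equals $(2\coth 2\pi-\coth\pi)/(4\pi)\approx0.079$, strictly below $c/\pi^2\approx0.092$, and numerically the $k\ge1$ rows are smaller still --- so forcing the paper's exact threshold $\pi^2/(2c)$ out of the $\ell^\infty$ norm would be artificial. If you complete the estimate you would in fact obtain a slightly \emph{better} condition than the statement, but the bookkeeping (harmonic partial sums in $k$, absolute values across $n=k$, a separate partial-fraction split for each $k$) is substantially heavier than the paper's one-line symmetrisation. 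The $\ell^2$ route is the cleaner one here precisely because the matrix $E$ is (up to the weights) skew with respect to the energy inner product, and that skewness is what your row-sum bound cannot exploit.
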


\begin{proof}
Let us consider the basis functions:
\begin{equation}\label{eignefunct}
    \phi^i_{mn}(x,z)=\begin{cases}
    \cos(2\pi mx)\cos(n \pi z)& \text{if}\;\; i=1\\
    \sin(2\pi mx)\cos(n \pi z)& \text{if}\;\; i=2
    \end{cases}
\end{equation}
which are the eigenfunctions of the Laplace operator:
\begin{equation}
    \Delta \phi^i_{mn}=-\alpha_{mn}\phi^i_{mn},
\end{equation}
 $\alpha_{mn}=4\pi^2m^2+\pi^2n^2$ being the eigenvalues. Since:
 
 \begin{equation}
     b_{mn}:=\|\phi^i_{mn}\|^2=\begin{cases}
     \dfrac{1}{2}&n=0 \\[2mm]
     \dfrac{1}{4}&\text{otherwise}
     \end{cases}
 \end{equation}
 defining $\gamma_{mn}=\sqrt{\alpha_{mn}b_{mn}}$,
 the following normalization can be introduced:
 
 \begin{equation}\label{normalizedbasis}
     \psi^i_{mn}=\dfrac{\phi_{mn}^i}{\gamma_{mn}}.
 \end{equation}
Equation \eqref{system2}$_1$ can be written in terms of \eqref{normalizedbasis} as:

\begin{equation}\label{38}
    -\sum_{i=1,2 \atop m,n} B^i_{mn}(-\Delta \psi^i_{mn})-\widehat{\beta} \sum_{i=1,2 \atop m,n}B^i_{mn}\dfrac{\partial \psi^i_{mn}}{\partial z}= \sum_{i=1,2 \atop m,n} e^{\b z}\Ra A^i_{mn}\dfrac{\partial \psi^i_{mn}}{\partial z}.
\end{equation}
If we multiply \eqref{38} by $\psi^j_{lr}$ and integrate on $V$ we obtain:

\begin{equation}\label{39}
      -\sum_{i=1,2 \atop m,n} B^i_{mn}\scal{\nabla \psi^i_{mn}}{\nabla \psi^j_{lr}}-\widehat{\beta} \sum_{i=1,2 \atop m,n}B^i_{mn}\scal{\dfrac{\partial \psi^i_{mn}}{\partial z}}{\psi^j_{lr}}= \sum_{i=1,2 \atop m,n} F^{i,j}_{mnlr}
\end{equation}
where:

\begin{equation}
    F^{i,j}_{mnlr}=e^{\b z}\Ra A^i_{mn}\scal{\dfrac{\partial \psi^i_{mn}}{\partial z}}{\psi^j_{lr}}
\end{equation}
From \eqref{eignefunct} and \eqref{normalizedbasis}, it follows that:
 
 \begin{equation}
     \scal{\dfrac{\partial \psi^i_{mn}}{\partial z}}{\psi^j_{lr}}=\dfrac{1}{\gamma_{mn}\gamma_{lr}}\scal{\dfrac{\partial \phi^i_{mn}}{\partial z}}{\phi^j_{lr}}=-\dfrac{\delta_{ij}\delta_{ml}}{\gamma_{mn}\,\gamma_{lr}}\dfrac{n}{2}\begin{cases}
     \frac{1}{n+r}+\frac{1}{n-r} & \text{se} \;\; n+r\geq 1 \;\;\text{odd}\\[2mm]
     0& \text{otherwise}
     \end{cases}
 \end{equation}
and equation \eqref{39} becomes:
 
 \begin{equation}\label{42}
     -B^j_{lr}+\b \sum_{i \,|\, m,n \atop n+r\geq 1\;\;\text{odd}} B^i_{mn}\dfrac{n}{2}\dfrac{\delta_{ij}\delta_{ml}}{\gamma_{mn}\gamma_{lr}}\left(\frac{1}{n+r}+\frac{1}{n-r}\right)-\sum_{i | m,n } F^{i,j}_{mnlr}=0
 \end{equation}
 Now, let us introduce the following continuous functions:
 
 \begin{equation}
 \begin{split}
    \mathcal{P}&:B\in\R^N\longmapsto  -B^j_{lr}+\b \sum_{i | m,n \atop n+r\geq 1\;\;\text{odd}} B^i_{mn}\dfrac{n}{2}\dfrac{\delta_{ij}\delta_{ml}}{\gamma_{mn}\gamma_{lr}}\left(\frac{1}{n+r}+\frac{1}{n-r}\right)-\sum_{i | m,n } F^{i,j}_{mnlr}\in\R^N   \\[2mm]
    \mathcal{G}&:B\in\R^N\longmapsto  \b \sum_{i | m,n \atop n+r\geq 1\;\;\text{odd}} B^i_{mn}\dfrac{n}{2}\dfrac{\delta_{ij}\delta_{ml}}{\gamma_{mn}\gamma_{lr}}\left(\frac{1}{n+r}+\frac{1}{n-r}\right)-\sum_{i | m,n } F^{i,j}_{mnlr}\in\R^N
 \end{split}
 \end{equation}
so the algebraic system \eqref{42} - equivalent to system \eqref{32} - can be written as $\mathcal{P}(B)=0$. Let us observe that the invertibility of $\mathcal{D}^m$ is equivalent to prove that system \eqref{42} admits a nontrivial solution, moreover, $B$ is a solution of \eqref{42} if and only if $B$ is a fixed point of $\mathcal{G}$:

\begin{equation}
    \mathcal{P}(B)=0\quad \Longleftrightarrow\quad \mathcal{G}(B)=B.
\end{equation}
The existence of a fixed point for $\mathcal{G}$ is guaranteed by the Leray–Schauder theorem, provided that:

\begin{equation}\label{45}
   \{ B\in\R^N\;|\;B=\lambda \mathcal{G}(B),\; 0\leq \lambda\leq 1 \}\subset B_R(0)
\end{equation}
$B_R(0)$ being a ball of radius $R>0$ centered in $0$, hence:

\begin{equation}
   \{ B\in\R^N\;|\;B=\lambda \mathcal{G}(B),\; 0\leq \lambda\leq 1 \}^\complement\supset \mathscr{B}:=\{ B\in\R^N\;|\;B=\lambda \mathcal{G}(B),\; \lambda> 1 \}.
\end{equation}
If $B\in\mathscr{B}$, then $\lambda B+(1-\lambda)B=\lambda\mathcal{G}(B)$, i.e. $(1-\lambda )B=\lambda \mathcal{P}(B)$, therefore:

\begin{equation}\label{47}
    \dfrac{1-\lambda}{\lambda}|B|^2=\mathcal{P}(B)\cdot B
\end{equation}
$|\cdot|$ being the standard euclidean norm. Therefore, from \eqref{45} and \eqref{47} we can state that the proof of the existence of a fixed pointy for $\mathcal{G}$ is equivalent to prove that: 

\begin{equation}\label{48}
\mathcal{P}(B)\cdot B=-\sum_{l,r}(B^j_{lr})^2+\dfrac{\b}{2} \sum_{i | m,n,r \atop n+r\geq 1\;\;\text{odd}} B^i_{mn}B^i_{mr}\dfrac{1}{\gamma_{mn}\gamma_{mr}}\left(\frac{n}{n+r}+\frac{n}{n-r}\right)-\sum_{i | m,n,l,r } F^{i,j}_{mnlr}B^i_{lr}
\end{equation}
is negative for $|B|>R$. For notational convenience let us set: 

\begin{equation}
\widetilde{B}^i_{mnr}=\dfrac{B^i_{mn}}{\gamma_{mr}}\qquad \text{and}\qquad     \widetilde{B}^i_{mrn}=\dfrac{B^i_{mr}}{\gamma_{mn}}
\end{equation}
and hence, from \eqref{48} we have:

\begin{equation}\label{50}
  \dfrac{1}{2} \!\!\! \sum_{i | m,n,r \atop n+r\geq 1\;\;\text{odd}} \widetilde{B}^i_{mnr}\widetilde{B}^i_{mrn}\left(\frac{n}{n+r}+\frac{n}{n-r}\right)=:I+J
\end{equation}
Therefore:

\begin{equation}\label{51}
\begin{split}
 I&= \sum_{i | m,n,r \atop n+r\geq 1\;\;\text{odd}} \widetilde{B}^i_{mnr}\widetilde{B}^i_{mrn}\frac{n}{n+r}=\sum_{i | m \atop n\;\text{even}, \; r\;\text{odd}} \widetilde{B}^i_{mnr}\widetilde{B}^i_{mrn}\frac{n}{n+r}+\sum_{i | m \atop n\;\text{odd}, \; r\;\text{even}} \widetilde{B}^i_{mnr}\widetilde{B}^i_{mrn}\frac{n}{n+r} \\[4mm]
 &=\sum_{i | m \atop n\;\text{even}, \; r\;\text{odd}} \widetilde{B}^i_{mnr}\widetilde{B}^i_{mrn}\frac{n}{n+r}+\sum_{i | m \atop n\;\text{even}, \; r\;\text{odd}} \widetilde{B}^i_{mnr}\widetilde{B}^i_{mrn}\frac{r}{n+r}=\sum_{i | m \atop n\;\text{even}, \; r\;\text{odd}} \widetilde{B}^i_{mnr}\widetilde{B}^i_{mrn}
\end{split}
\end{equation}
and similarly with $J$ 
\begin{equation}\label{51-bis}
\begin{split}
 J&= \sum_{i | m,n,r \atop n+r\geq 1\;\;\text{odd}} \widetilde{B}^i_{mnr}\widetilde{B}^i_{mrn}\frac{n}{n-r}=\sum_{i | m \atop n\;\text{even}, \; r\;\text{odd}} \widetilde{B}^i_{mnr}\widetilde{B}^i_{mrn}\frac{n}{n-r}+\sum_{i | m \atop n\;\text{odd}, \; r\;\text{even}} \widetilde{B}^i_{mnr}\widetilde{B}^i_{mrn}\frac{n}{n-r} \\[4mm]
 &=\sum_{i | m \atop n\;\text{even}, \; r\;\text{odd}} \widetilde{B}^i_{mnr}\widetilde{B}^i_{mrn}\frac{n}{n-r}+\sum_{i | m \atop n\;\text{even}, \; r\;\text{odd}} \widetilde{B}^i_{mnr}\widetilde{B}^i_{mrn}\frac{r}{n-r}=\sum_{i | m \atop n\;\text{even}, \; r\;\text{odd}} \widetilde{B}^i_{mnr}\widetilde{B}^i_{mrn}
\end{split}
\end{equation}
By virtue of \eqref{51} and \eqref{51-bis}, Cauchy-Schwarz and Young inequalities and since $\gamma^2_{mn}\geq \dfrac{4\pi^2+n^2\pi^2}{4}$, from \eqref{50} it follows:

\begin{equation}\label{52}
   \begin{split}
       \sum_{i | m \atop n\;\text{even}, \; r\;\text{odd}} \widetilde{B}^i_{mnr}\widetilde{B}^i_{mrn}&=\sum_{i \atop n\;\text{even}, \; r\;\text{odd}} \widetilde{B}^i_{\cdot, nr}\cdot \widetilde{B}^i_{\cdot, rn}\leq \sum_{i \atop n\;\text{even}, \; r\;\text{odd}} |\widetilde{B}^i_{\cdot, nr}|\; |\widetilde{B}^i_{\cdot, rn}|\\[2mm]
       &\leq \dfrac{1}{2}\left[ \sum_{i \atop n\;\text{even}, \; r\;\text{odd}} |\widetilde{B}^i_{\cdot, nr}|^2+\sum_{i \atop n\;\text{even}, \; r\;\text{odd}} |\widetilde{B}^i_{\cdot, rn}|^2 \right]\\[2mm]
       &=\dfrac{1}{2}\left[ \sum_{i\;|\; m \atop n\;\text{even}, \; r\;\text{odd}} \dfrac{|{B}^i_{mn}|^2}{\gamma^2_{mr}}+\sum_{i\;|\; m \atop n\;\text{even}, \; r\;\text{odd}} \dfrac{|{B}^i_{mr}|^2}{\gamma^2_{mn}} \right]\\[2mm]
       &\leq  2\left[ \sum_{i\;|\; m \atop n\;\text{even}} |{B}^i_{mn}|^2\sum_{r\;\text{odd} \atop} \dfrac{1}{4\pi^2+r^2\pi^2} +\sum_{i\;|\; m \atop r\;\text{odd}} |{B}^i_{mr}|^2\sum_{n\;\text{even} \atop} \dfrac{1}{4\pi^2+n^2\pi^2}  \right]\\
       &\leq \dfrac{2}{\pi^2}\left[ \sum_{i\;|\; m \atop n\;\text{even}} |{B}^i_{mn}|^2 +\sum_{i\;|\; m \atop r\;\text{odd}} |{B}^i_{mr}|^2  \right]\sum_{n \atop} \dfrac{1}{4+n^2}\\
       &=\dfrac{2}{\pi^2}|B|^2\sum_{n \atop} \dfrac{1}{4+n^2} = \dfrac{2}{\pi^2}|B|^2c
   \end{split} 
\end{equation}
where $c=\dfrac{1}{8} [2 \pi \coth({2 \pi})+1]$. Finally, from \eqref{48} and \eqref{52} one gets:

\begin{equation}
    \mathcal{P}(B)\cdot B\leq -|B|^2+\b c\dfrac{2}{\pi^2}|B|^2+K|B|
\end{equation}
with $K=|F|$. Therefore, for $|B|>R := K/(1-\widehat{\beta} c 2 \pi^{-2})$ and if

\begin{equation}
\b<\dfrac{\pi^2}{2c}    
\end{equation}
it follows $ \mathcal{P}(B)\cdot B<0$. 

\end{proof}

Solving system \eqref{32}, we get component-wise the same relation for the coefficients $B^1_{mj}$ and $B^2_{mj}$, i.e. for $i=1,2$:

\begin{equation}\label{B}
B^i_{mj}=-\dfrac{\Ra  \widehat{\beta}}{\alpha_{mk}} \sum_{n,k=0}^{\infty}A^i_{mn}\mathcal{L}_{nk}(\widehat{\beta}) [\mathcal{D}^m(\widehat{\beta})]^{-1}_{kj}.
\end{equation}
Now, let us substitute \eqref{C} and \eqref{B} in \eqref{fourier}$_2$, obtaining:

\begin{equation}\label{coeffA}
\!\!\!\!\!
\begin{split}
  \displaystyle & \sum_{m,j=0}^{\infty} [ \dfrac{\alpha_{mj}}{2 \pi m} (\dot{A}^2_{mj} + \alpha_{mj} A^2_{mj}) \cos(2\pi mx)-  \dfrac{\alpha_{mj}}{2 \pi m} (\dot{A}^1_{mj} + \alpha_{mj} A^1_{mj}) \sin(2\pi mx)] \sin(j \pi z) = \\ & \qquad \quad  \sum_{m,j=0}^{\infty}  2 \pi m \Bigl\{ \Ra  [-A^1_{mj}\sin(2\pi mx)+A^2_{mj}\cos(2\pi mx)] \sin(\pi jz) \\ \displaystyle & \qquad \qquad  - \widehat{\beta} e^{-\widehat{\beta} z} \Bigl[ \dfrac{\Ra  \widehat{\beta}}{\alpha_{mj}} \sum_{n,k=0}^{\infty}A^1_{mn}\mathcal{L}_{nk}(\widehat{\beta}) [\mathcal{D}^m(\widehat{\beta})]^{-1}_{kj} \sin(2\pi mx) \\ & \qquad \qquad \qquad \quad \quad -\dfrac{\Ra  \widehat{\beta}}{\alpha_{mj}} \sum_{n,k=0}^{\infty}A^2_{mn}\mathcal{L}_{nk}(\widehat{\beta}) [\mathcal{D}^m(\widehat{\beta})]^{-1}_{kj} \cos(2\pi mx)\Bigr] \cos(j \pi z) \Bigr\}    
\end{split}
\end{equation}
Let us multiply \eqref{coeffA} by $\sin(h \pi  z)$ and integrate with respect to $z \in (0,1)$, therefore we get:

\begin{equation}\label{galerkina}
\mkern-36mu \mkern-36mu \begin{split}
  \displaystyle & \sum_{m,j=0}^{\infty} \!\! \left[ \dfrac{\alpha_{mj}}{2 \pi m} (\dot{A}^2_{mj} + \alpha_{mj} A^2_{mj}) \cos(2\pi mx) \!-\!  \dfrac{\alpha_{mj}}{2 \pi m} (\dot{A}^1_{mj} + \alpha_{mj} A^1_{mj}) \sin(2\pi mx) \right] \!\! \int_0^1 \!\! \sin(j \pi z)\sin(h\pi z)dz \!=\! \\ & \qquad \qquad  \sum_{m,j=0}^{\infty}  2 \pi m \Bigl\{ \Ra  [-A^1_{mj}\sin(2\pi mx)+A^2_{mj}\cos(2\pi mx)] \int_0^1\sin(\pi jz)\sin(h\pi z)dz \\ \displaystyle & \qquad  - \widehat{\beta}  \Bigl[\dfrac{\Ra  \widehat{\beta}}{\alpha_{mj}} \sum_{n,k=0}^{\infty}A^1_{mn}\mathcal{L}_{nk}(\widehat{\beta})[\mathcal{D}^m(\widehat{\beta})]^{-1}_{kj} \sin(2\pi mx)\\
  &\qquad\qquad-\dfrac{\Ra  \widehat{\beta}}{\alpha_{mj}} \sum_{n,k=0}^{\infty}A^2_{mn}\mathcal{L}_{nk}(\widehat{\beta})[\mathcal{D}^m(\widehat{\beta})]^{-1}_{kj} \cos(2\pi mx)\Bigr] \int_0^1 e^{-\widehat{\beta} z}\cos(j \pi z)\sin(h\pi z)dz \Bigr\}    
\end{split}  
\end{equation}
Hence:
\begin{equation}\label{galerkina2}
\begin{split}
  \displaystyle & \sum_{m=0}^{\infty}  \left[ \dfrac{\alpha_{mh}}{2 \pi m} (\dot{A}^2_{mh} + \alpha_{mh} A^2_{mh}) \cos(2\pi mx)-  \dfrac{\alpha_{mh}}{2 \pi m} (\dot{A}^1_{mh} + \alpha_{mh} A^1_{mh}) \sin(2\pi mx) \right] = \\ & \qquad \qquad  \sum_{m=0}^{\infty} 2 \pi m \Ra  [-A^1_{mh}\sin(2\pi mx)+A^2_{mh}\cos(2\pi mx)] \\ \displaystyle & \qquad \qquad - \sum_{m=0}^{\infty} \widehat{\beta}^2 \Ra  2 \pi m \sum_{j,n,k=0}^{\infty} \dfrac{1}{\alpha_{mj}} A^1_{mn}\mathcal{L}_{nk}(\widehat{\beta})[\mathcal{D}^m(\widehat{\beta})]^{-1}_{kj} \mathcal{N}_{jh}(\widehat{\beta}) \sin(2\pi mx)   \\  \displaystyle & \qquad \qquad + \sum_{m=0}^{\infty} \widehat{\beta}^2 \Ra  2 \pi m \sum_{j,n,k=0}^{\infty} \dfrac{1}{\alpha_{mj}} A^2_{mn}\mathcal{L}_{nk}(\widehat{\beta}) [\mathcal{D}^m(\widehat{\beta})]^{-1}_{kj} \mathcal{N}_{jh}(\widehat{\beta}) \cos(2\pi mx) 
\end{split}
\end{equation}
with

\begin{equation}
    \mathcal{N}_{jh}(\widehat{\beta})=\pi (1-e^{-\widehat{\beta}}(-1)^{h+j})\left(\dfrac{h+j}{\pi^2(h+j)^2+\widehat{\beta}^2}+\dfrac{h-j}{\pi^2(h-j)^2+\widehat{\beta}^2}\right)
\end{equation}
\\ \\
By the linear independence of the sinus and cosinus functions with respect to the variable $x$, we get, for $i=1,2$:

\begin{equation}\label{eqa}
    \dfrac{\alpha_{mh}}{2\pi m} (\dot{A}^i_{mh} + \alpha_{mh} A^i_{mh})= 2 \pi m \Ra A^i_{mh} + \widehat{\beta}^2 \Ra 2 \pi m \sum_{j,n,k=0}^{\infty} \dfrac{1}{\alpha_{mj}} A^i_{mn}\mathcal{L}_{nk}(\widehat{\beta}) [\mathcal{D}^m(\widehat{\beta})]^{-1}_{kj} \mathcal{N}_{jh}(\widehat{\beta}).
\end{equation}
Equations \eqref{eqa} are first order ODEs with respect to time $t$. To get a unique solution, system \eqref{eqa} decouples and let $A^i_{mh}$ be the only non-vanishing coefficient, which satisfies the following first-order ordinary differential equation:
\begin{equation}\label{eqa2}
  \dot{A}^i_{mh} + \alpha_{mh} {A}^i_{mh} = \dfrac{4\pi^2 m^2}{\alpha_{mh}} \Ra \  {A}^i_{mh} + \widehat{\beta}^2 \Ra  \dfrac{4\pi^2 m^2}{\alpha_{mh}} {A}^i_{mh} \sum_{j,k=0}^\infty \dfrac{1}{\alpha_{mj}}\mathcal{L}_{hk}(\widehat{\beta})[\mathcal{D}^m(\widehat{\beta})]^{-1}_{kj} \mathcal{N}_{jh}(\widehat{\beta}) 
\end{equation}
together with the initial conditions on $A^i_{mh}$ that can be derived from \eqref{incond}$_3$ and \eqref{series}$_1$. Setting

\begin{equation}
    \mathcal{G}_{mh}(\widehat{\beta})=\dfrac{4\pi^2 m^2}{\alpha_{mh}}\sum_{j,k=0}^\infty \dfrac{1}{\alpha_{mj}}\mathcal{L}_{hk}(\widehat{\beta})[\mathcal{D}^m(\widehat{\beta})]^{-1}_{kj} \mathcal{N}_{jh}(\widehat{\beta})
\end{equation}
\eqref{eqa2} is equivalent to

\begin{equation}\label{ode}
    \dot{A}^i_{mh} + A^i_{mh} \Bigl[ \alpha_{mh} -  \Ra \dfrac{4\pi^2 m^2}{\alpha_{mh}}  - \widehat{\beta}^2 \Ra \ \mathcal{G}_{mh}(\widehat{\beta}) \Bigr] =0
\end{equation}
whose solution can be easily computed to be:
\begin{equation}
    A^i_{mh}(t) = \gamma e^{\left( - \alpha_{mh} +  \Ra \frac{4\pi^2 m^2}{\alpha_{mh}}  + \widehat{\beta}^2 \Ra \ \mathcal{G}_{mh}(\widehat{\beta}) \right)t}
\end{equation}
$\gamma$ being a constant depending on the initial conditions. We obtain that the perturbation fields \eqref{series} have an exponential dependence on time $t$, so let  us define the generalized eigenvalue $\sigma_{mh}$:
\begin{equation}\label{eigenvalue}
    \sigma_{mh} = - \alpha_{mh} +  \Ra \dfrac{4\pi^2 m^2}{\alpha_{mh}}  + \widehat{\beta}^2 \Ra \ \mathcal{G}_{mh}(\widehat{\beta})
\end{equation}

\section{Results and discussion}\label{disc} 
\begin{rem}
Let us first underline that the eigenvalues \eqref{eigenvalue} are real $\forall \ m,h$. Therefore, the strong principle of exchange of stabilities holds and convection can arise only via stationary motions.  
\end{rem}
\begin{rem}
In the limit case $\widehat{\beta} \rightarrow 0$ (i.e. according to the classical Oberbeck-Boussinesq approximation), \eqref{eigenvalue} becomes
\begin{equation}\label{eigenvalue0}
    \sigma_{mh}= - \alpha_{mh} +  \Ra \dfrac{4\pi^2 m^2}{\alpha_{mh}} 
\end{equation}
so, requiring the eigenvalue $\sigma_{mh}$ to be positive, we get
\begin{equation}\label{ra}
    - \alpha_{mh} +  \Ra \dfrac{4\pi^2 m^2}{\alpha_{mh}} >0.
\end{equation}
Therefore, convection arises if the Rayleigh-Darcy number is greater than the critical value 
\begin{equation}\label{min0}
    \Ra_c =\min_{m,h} \dfrac{[(2 \pi m)^2 + ( h \pi)^2]^2}{4\pi^2 m^2}
\end{equation}
The minimum \eqref{min0} is obtained for $h=1$ and $m^*= \dfrac{1}{2}$, so the classical result is recovered, i.e. the critical wavenumber is $(2 \pi m^*)^2=\pi^2$, while the critical Rayleigh-Darcy number is:
\begin{equation}
    \Ra_c =4 \pi^2
\end{equation}
\end{rem}
According to \eqref{eigenvalue}, the marginal instability threshold is given setting $\sigma_{mn}=0$, i.e. 
\begin{equation}
    \Ra \Bigl( \dfrac{4\pi^2 m^2}{\alpha_{mh}}  + \widehat{\beta}^2  \mathcal{G}_{mh}(\widehat{\beta}) \Bigr) - \alpha_{mh} =0
\end{equation}
so, when the horizontal layer of porous medium is saturated by an extended-quasi-thermal-incompressible fluid, the critical Rayleigh-Darcy number for the onset of convection is given by:
\begin{equation}\label{min}
    \Ra_L=\inf_{m,h} \dfrac{\alpha_{mh}^2}{4\pi^2 m^2  + \widehat{\beta}^2  \alpha_{mh} \mathcal{G}_{mh}(\widehat{\beta}) }.
\end{equation}
In order to analyse the influence of the dimensionless compressibility factor $\b$ on the onset of convection, we numerically solved \eqref{min} for quoted values of $\widehat{\beta}$, under the restriction \eqref{bbbb} found in Theorem \ref{thmD}. \\ 
We found that the function $\mathcal{G}_{mh}$ is always positive and the dimensionless compressibility factor $\widehat{\beta}$ has a \emph{destabilizing} effect on the onset of convective flows: the behaviour of the critical Rayleigh-Darcy number with respect to $\hat{\beta}$ is decreasing (see Figures \ref{a} -- \ref{b}) and
\begin{equation}
     \Ra_L < \Ra_c \qquad \forall \ \widehat{\beta}>0.
\end{equation}

\begin{figure}[h!]
    \centering
    \includegraphics[scale=0.7]{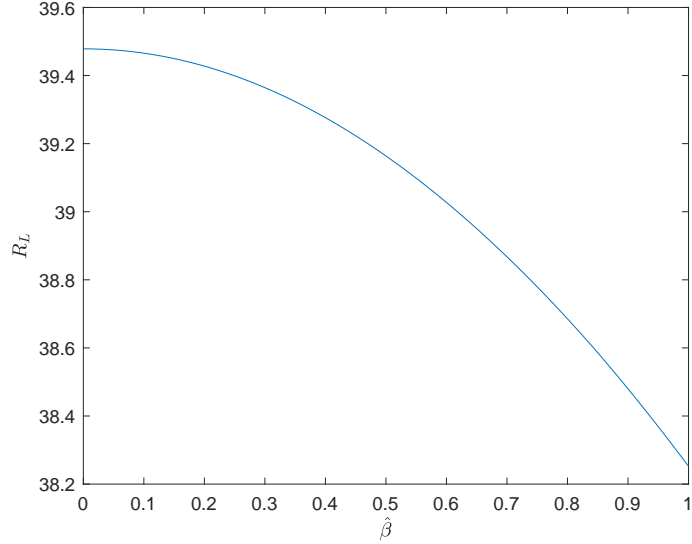}
    \caption{Critical Rayleigh-Darcy number $\Ra_L$ as function of the compressibility factor $\hat{\beta}$.}
    \label{a}
\end{figure}

\begin{figure}[h!]
    \centering
    \includegraphics[scale=0.7]{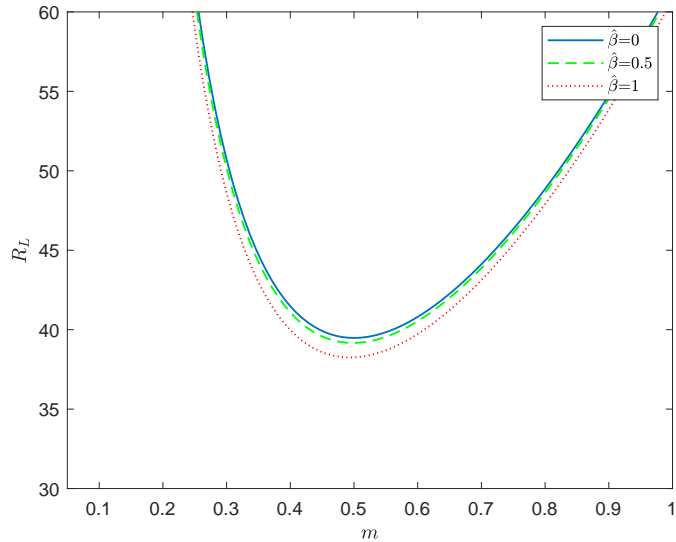}
    \caption{Neutral curves for quoted values of the compressibility factor $\hat{\beta}$.}
    \label{b}
\end{figure}

\section{Conclusions}
To the best of our knowledge, in this paper the Darcy-B\'enard problem for an extended-quasi-thermal-incompressible fluid has been studied for the first time. We determined the instability threshold for the onset of convection via linear instability analysis of the conduction solution: through a closed algebraic form, we showed that the critical Rayleigh-Darcy number depends on the dimensionless compressibility factor $\b$ and we rigorously proved that $\b$ has a destabilizing effect. Moreover, in the limit case $\widehat{\beta} \rightarrow 0$ (i.e. according to the classical Oberbeck-Boussinesq approximation), the critical threshold for the Darcy-B\'enard problem $4 \pi^2$ is recovered. 

\textbf{Acknowledgements.} This paper has been performed under the auspices of the GNFM of INdAM.

\bibliographystyle{unsrt}
\bibliography{main}

\end{document}